\documentclass[pra,twocolumn,amsmath,18pt]{revtex4-1}
\usepackage{pifont}
\usepackage{graphicx,epsfig,subfigure,dsfont,amssymb,amsmath,amsthm,amsfonts,amsbsy,mathrsfs,amscd}
\usepackage[all]{xy}

\newtheorem{theorem}{Theorem}

\newtheorem{lemma}{Lemma}

\input amssym.def

\begin{document}

\title{Tighter Monogamy Inequalities of Multiqubit Entanglement}

\smallskip
\author{Jia-Yi Li$^1$ }
\email{17375870834@163.com}
\author{Zhong-Xi Shen$^1$}
\email{18738951378@163.com}
\author{Shao-Ming Fei$^1$}
\email{feishm@cnu.edu.cn}
\affiliation{
$^1$School of Mathematical Sciences, Capital Normal University, Beijing 100048, China\\
}

\begin{abstract}
Multipartite entanglement holds great importance in quantum information processing. The distribution of entanglement among subsystems can be characterized by monogamy relations. Based on the $\beta$th power of concurrence and negativity, we provide two new monogamy inequalities. 
Through detailed examples, we demonstrate that these inequalities are tighter than previous results.
\end{abstract}

\maketitle
\baselineskip20pt
\section{Introduction}

As a fundamental resource \cite{RPMK,NMR} in quantum communications \cite{JB,RH,NR}, quantum cryptography \cite{AKE,NGWH} and quantum computing \cite{AR,MAIL,ASC}, quantum entanglement \cite{RPMK,IK,OG,STMA} has attracted much attention for a long time. 
One notable characteristic of quantum entanglement is that when a quantum system becomes entangled with one of its subsystems, it restricts its entanglement with the rest of the subsystems. This phenomenon is referred to as the monogamy of entanglement (MoE) \cite{BMT,JSK}. The concept of MoE is crucial in numerous quantum information and communication processing endeavors.

For a tripartite quantum state $\rho_{ABC}$, MoE can be described by 
\begin{equation}
E(\rho_{A|BC})\geq E(\rho_{AB})+E(\rho_{AC}),
\end{equation}
where $\rho_{AB}$ and $\rho_{AC}$ are the reduced density matrices, $E$ is a  entanglement measure, $E(\rho_{A|BC})$ stands for the entanglement under the bipartition $A$ and $BC$. Such monogamy inequalities depend on the entanglement measure $E$ and the state $\rho_{ABC}$. 

Recently, generalized classes of monogamy inequalities have been proposed, which refer to the $\beta$-th power of the entanglement measures. In Ref.\cite{SM.Fei1,SM.Fei3}, the authors showed that the squared concurrence and CREN satisfy the monogamy inequalities in multiqubit systems for $\beta\geq2$. The authors in Ref.\cite{Yanglongmei,Taoyuanhong} presented the tighter monogamy inequalities for $\beta\geq2$ and $\beta\geq4$.

In this work, we study monogamy relations in multiqubit systems. We establish new monogamy inequalities for multiqubit entanglement, which are tighter than existing monogamy inequalities. Thus, our monogamy inequalities lead to a finer characterization of the distribution of entanglement in multiqubit systems.

\section{Tighter monogamy inequalities related to concurrence}
The concurrence of a bipartite pure state $|\psi\rangle_{AB}$ in Hilbert space $H_A\otimes H_B$ is defined by \cite{fide,SM.Fei4}
$\mathcal{C}(|\psi\rangle_{AB})=\sqrt{2(1-{\rm tr}\rho_A^2)}$ with $\rho_A={\rm tr}_B|\psi\rangle_{AB}\langle\psi|$.
 The concurrence of a bipartite mixed state $\rho_{AB}$ is defined by the convex roof extension,
$\mathcal{C}(\rho_{AB})=\min\limits_{\{p_i,|\psi_i\rangle\}}\sum\limits_{i}p_i\mathcal{C}(|\psi_i\rangle)$,
where the minimum is taken over all possible pure state decompositions of $\rho_{AB}=\sum\limits_{i}p_i|\psi_i\rangle\langle\psi_i|$
with $\sum p_i=1$ and $p_i\geq0$. For an $N$-qubit state $\rho_{AB_1\cdots B_{N-1}}\in H_A\otimes H_{B_1}\otimes\cdots\otimes H_{B_{N-1}}$,
the concurrence $\mathcal{C}(\rho_{A|B_1\cdots B_{N-1}})$ of the state $\rho_{AB_1\cdots B_{N-1}}$ under bipartite partition $A$ and $B_1\cdots B_{N-1}$ satisfies \cite{SM.Fei1}
\begin{equation}\label{Con1}
\begin{array}{rl}
&\mathcal{C}^\beta(\rho_{A|B_1\cdots B_{N-1}})\\[2.0mm]
&\ \ \geq\mathcal{C}^\beta(\rho_{AB_1})+\mathcal{C}^\beta(\rho_{AB_2})+\cdots+\mathcal{C}^\beta(\rho_{AB_{N-1}}),
\end{array}
\end{equation}
for $\beta\geq2$, where $\rho_{AB_j}$ denote the two-qubit reduced density matrices of subsystems $AB_j$ for $j=1,2,\ldots,N-1$.
Later, the relation \eqref{Con1} is improved to be \cite{SM.Fei2},
\begin{equation}\label{Con2}
\begin{array}{rl}
&\mathcal{C}^\beta(\rho_{A|B_1\cdots B_{N-1}})\\[2.0mm]
&\ \ \geq\mathcal{C}^\beta(\rho_{AB_1})+\frac{\beta}{2}\mathcal{C}^\beta(\rho_{AB_2})+\cdots\\[2.0mm]
&\ \ \ \ +\big(\frac{\beta}{2}\big)^{m-1}\mathcal{C}^\beta(\rho_{AB_{m}})\\[2.0mm]
&\ \ \ \ +\big(\frac{\beta}{2}\big)^{m+1}[\mathcal{C}^\beta(\rho_{AB_{m+1}})+\cdots+\mathcal{C}^\beta(\rho_{AB_{N-2}})]\\[2.0mm]
&\ \ \ \ +\big(\frac{\beta}{2}\big)^m\mathcal{C}^\beta(\rho_{AB_{N-1}})
\end{array}
\end{equation}
for $\beta\geq2$, if $\mathcal{C}(\rho_{AB_i})\geq\mathcal{C}(\rho_{A|B_{i+1}\cdots B_{N-1}})$ for $i=1,2,\ldots,m$,
and $\mathcal{C}(\rho_{AB_j})\leq\mathcal{C}(\rho_{A|B_{j+1}\cdots B_{N-1}})$ for $j=m+1,\ldots,N-2$.

The relation \eqref{Con2} is further improved for $\beta\geq2$ as \cite{SM.Fei3},
\begin{equation}\label{Con3}
\begin{array}{rl}
&\mathcal{C}^\beta(\rho_{A|B_1\cdots B_{N-1}})\\[2.0mm]
&\ \ \geq\mathcal{C}^\beta(\rho_{AB_1})+\big(2^{\frac{\beta}{2}}-1\big)\mathcal{C}^\beta(\rho_{AB_2})+\cdots\\[2.0mm]
&\ \ \ \ +\big(2^{\frac{\beta}{2}}-1\big)^{m-1}\mathcal{C}^\beta(\rho_{AB_{m}})\\[2.0mm]
&\ \ \ \ +\big(2^{\frac{\beta}{2}}-1\big)^{m+1}[\mathcal{C}^\beta(\rho_{AB_{m+1}})+\cdots+\mathcal{C}^\beta(\rho_{AB_{N-2}})]\\[2.0mm]
&\ \ \ \ +\big(2^{\frac{\beta}{2}}-1\big)^m\mathcal{C}^\beta(\rho_{AB_{N-1}})
\end{array}
\end{equation}
under the same conditions as in \eqref{Con2}. The relation \eqref{Con3} is further improved to be
\begin{equation}\label{ylm}
\begin{array}{rl}
&\mathcal{C}^\beta_{A|B_1\cdots B_{N-1}}\\[2.0mm]
&\ \ \geq\mathcal{C}^\beta_{AB_1}+\frac{(1+k)^{\frac{\beta}{2}}-1}{k^{\frac{\beta}{2}}}\mathcal{C}^\beta_{AB_2}+\cdots\\[2.0mm]
&\ \ \ \ +\Big(\frac{(1+k)^{\frac{\beta}{2}}-1}{k^{\frac{\beta}{2}}}\Big)^{m-1}\mathcal{C}^\beta_{AB_{m}}\\[2.0mm]
&\ \ \ \ +\Big(\frac{(1+k)^{\frac{\beta}{2}}-1}{k^{\frac{\beta}{2}}}\Big)^{m+1}\Big(\mathcal{C}^\beta_{AB_{m+1}}+\cdots+
\mathcal{C}^\beta_{AB_{N-2}}\Big)\\[2.0mm]
&\ \ \ \ +\Big(\frac{(1+k)^{\frac{\beta}{2}}-1}{k^{\frac{\beta}{2}}}\Big)^m\mathcal{C}^\beta_{AB_{N-1}}
\end{array}
\end{equation}
for all $\beta\geq2$ with $k\mathcal{C}_{AB_i}^2\geq\mathcal{C}_{A|B_{i+1}\cdots B_{N-1}}^2$
for $i=1,2,\ldots,m$, and $\mathcal{C}_{AB_j}^2\leq k\mathcal{C}_{A|B_{j+1}\cdots B_{N-1}}^2$ for $j=m+1,\ldots,N-2$,
$\forall 1\leq m\leq N-3$, $N\geq 4$.

In fact, the monogamy relation \eqref{Con3} is based on the inequality \cite{SM.Fei3},
$(1+t)^x \geq 1+(2^x-1)t^x$, for any real numbers $x$ and $t$ satisfying $0\leq t \leq 1$ and $x\geq 1$.
Recently, in \cite{Taoyuanhong} the authors proved that
$(1+t)^x \geq 1+(2^x-t^x)t^x$ and from which derived a new monogamy relation,
\begin{equation}\label{Con4}
\begin{aligned}
\mathcal{C} ^{\beta}_{A|B_0\cdots B_{N-1}} & \geq \sum\limits_{i=0}^m ( \prod\limits_{j=0}^i \mathcal{M}_j ) \mathcal{C}^{\beta}_{AB_i}   \\
 & +( \prod\limits_{i=1}^{m+1} \mathcal{M}_i ) ( \prod\limits_{j=m+1}^{N-2} \mathcal{Q}_j \mathcal{C}^\beta_{AB_j} + \mathcal{C}^{\beta}_{AB_{N-1}} )
\end{aligned}
\end{equation}
for $\beta \geq 4$, where $\mathcal{M}_0=1$, $\mathcal{M}_{i+1}=2^{\frac{\alpha}{2}} - \frac{\mathcal{C} ^\beta_{A|B_{i+1}\cdots B_{N-1}}}{\mathcal{C}^\beta_{AB_i}} $ for $i=0,1,2,\cdots m$, and $\mathcal{Q}_j=2^{\frac{\alpha}{2}}-\frac{\mathcal{C}^\beta_{AB_j}}{\mathcal{C} ^\beta_{A|B_{j+1}\cdots B_{N-1}}} $ for $j=m+1, \cdots , N-2$. The monogamy inequality (\ref{Con4}) holds under the conditions that $\mathcal{C}^{\beta}_{AB_i} \geq \mathcal{C} ^\beta_{A|B_{i+1}\cdots B_{N-1}}$ for $0 \leq i \leq m$, and $\mathcal{C}^{\beta}_{AB_j} \leq \mathcal{C} ^\beta_{A|B_{j+1}\cdots B_{N-1}}$ for $m+1 \leq j \leq N-3$, $0 \leq m \leq N-3$ and $N \geq 3$. In the relation \eqref{Con4}, $\mathcal{C}_{AB}=\mathcal{C}(\rho_{AB_i})$ denotes the concurrence of $\rho_{AB_i}=Tr_{B_i \cdots B_{i-1}B_{i+1} \cdots B_{N-1}} (|\psi\rangle_{AB_1 \cdots B_{N-1}} \langle\psi|)$, $\mathcal{C}_{A|B_1\cdots B_{N-1}}=\mathcal{C} (\rho_{A|B_1\cdots B_{N-1}})$.

The above monogamy relations for concurrence can be further tightened under certain conditions. For this purpose, let us first introduce the following lemma.
\begin{lemma}\label{con1}
Let $k$ be a real number satisfying $0<k\leq1$. For any $0\leq t\leq k$ and non-negative real number $x$, we have
\begin{equation}\label{Con5}
(1+t)^x\geq1+[\frac{(1+k)^x-1}{k^x}+k^x-t^x]t^x
\end{equation}
for $x\geq2$.
\end{lemma}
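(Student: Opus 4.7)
My plan is to recast the desired inequality as a single-variable monotonicity statement and then reduce it to a Bernoulli-type bound.

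At $t=0$ both sides equal $1$, so the content lies in $0<t\leq k$. On this range I would divide the claim through by $t^x>0$ and rearrange, which converts it into the equivalent form
$$\frac{(1+t)^x-1}{t^x}+t^x\;\geq\;\frac{(1+k)^x-1}{k^x}+k^x.$$
Defining $h(s):=\dfrac{(1+s)^x-1}{s^x}+s^x$ on $(0,1]$, the lemma is simply the assertion $h(t)\geq h(k)$, which will follow once I show that $h$ is non-increasing on $(0,1]$.

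A direct differentiation gives
$$h'(s)\;=\;\frac{x}{s^{x+1}}\bigl[\,1-(1+s)^{x-1}+s^{2x}\,\bigr],$$
so the monotonicity reduces to the auxiliary inequality $(1+s)^{x-1}\geq 1+s^{2x}$ for $0<s\leq 1$ and $x\geq 2$. I would handle this by Bernoulli's inequality: since $x-1\geq 1$, $(1+s)^{x-1}\geq 1+(x-1)s$, and it remains to verify $(x-1)s\geq s^{2x}$, i.e.\ $s^{2x-1}\leq x-1$. As $0<s\leq 1$ and $2x-1\geq 3$ give $s^{2x-1}\leq 1\leq x-1$, this is immediate, so $h'(s)\leq 0$ on $(0,1]$ and the main inequality follows.

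I do not foresee a serious technical obstacle. The only real insight required is the reformulation as $h(t)\geq h(k)$: the difference $\phi(t):=(1+t)^x-1-\bigl[\tfrac{(1+k)^x-1}{k^x}+k^x-t^x\bigr]t^x$ vanishes at both endpoints $t=0$ and $t=k$, so a direct attempt to prove $\phi\geq 0$ would demand a delicate Rolle/concavity analysis to rule out sign changes in the interior. Dividing by $t^x$ removes this endpoint degeneracy and turns the problem into the clean monotonicity of a single function, after which the Bernoulli bound closes everything routinely.
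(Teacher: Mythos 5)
Your proof is correct, and at its core it is the same argument as the paper's: under the substitution $y=1/t$ your function $h(t)=\frac{(1+t)^x-1}{t^x}+t^x$ is exactly the paper's $f(x,y)=(1+y)^x-y^x+y^{-x}$, and both proofs reduce the lemma to the statement that this quantity is monotone (non-increasing in $t$ on $(0,1]$, equivalently increasing in $y\geq 1$), so that $h(t)\geq h(k)$ for $0<t\leq k$. The only difference is how the monotonicity is verified: the paper argues via the second derivative, showing $\partial^2 f/\partial y^2\geq 0$ for $x\geq 2$, whereas you compute $h'(s)=\frac{x}{s^{x+1}}\bigl[1-(1+s)^{x-1}+s^{2x}\bigr]$ directly (this derivative is correct) and close with Bernoulli's inequality, $(1+s)^{x-1}\geq 1+(x-1)s\geq 1+s^{2x}$ since $s^{2x-1}\leq 1\leq x-1$. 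Your route is, if anything, slightly more complete as written: convexity of $f$ in $y$ alone does not give monotonicity, and the paper tacitly needs the additional endpoint check $\partial f/\partial y\big|_{y=1}=x(2^{x-1}-2)\geq 0$ for $x\geq 2$, a step your first-derivative argument does not require.
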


\begin{proof}
 Consider the function $f(x,y)=(1+y)^x-y^x+y^{-x}$ with $x\geq 1$ and $y\geq1$. We have $\frac{\partial f}{\partial y}=x(1+y)^{x-1} - xy^{x-1} -xy^{-x-1}$ and $\frac{\partial^2 f}{\partial y^2}=x(x-1)(1+y)^{x-2} - x(x-1)y^{x-2} +x(x+1)y^{-x-2}$. Therefore, $\frac{\partial^2 f}{\partial y^2} \geq 0$ for $x \geq 2$ and
$f(x,y)$ is an increasing function of $y$. Setting $y=\frac{1}{t}$, for $0\leq t\leq k$ we have $f(x, \frac{1}{t}) \geq f(x,\frac{1}{k})$, which gives rise to (\ref{Con5}). For $t=0$ the inequality (\ref{Con5}) is just trivial.
\end{proof}

\begin{lemma}\label{con2}
Let $k$ be a real number satisfying $0< k\leq1$. For any $2\otimes2\otimes2^{n-2}$ mixed state $\rho\in H_A\otimes H_B\otimes H_C$,
if $\mathcal{C}_{AC}^2\leq k\mathcal{C}_{AB}^2$, we have
\begin{equation}\label{Con9}
\mathcal{C}_{A|BC}^\beta\geq\mathcal{C}_{AB}^\beta+\Big[\frac{(1+k)^{\frac{\beta}{2}}-1}{k^{\frac{\beta}{2}}} + k^{\frac{\beta}{2}} -(\frac{\mathcal{C}_{AC}}{\mathcal{C}_{AB}})^\beta \Big]\mathcal{C}_{AC}^\beta
\end{equation}
for all $\beta\geq 4$.
\end{lemma}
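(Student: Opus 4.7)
The plan is to combine the Coffman--Kundu--Wootters (CKW) type monogamy of the squared concurrence with Lemma~\ref{con1}. For any $2\otimes 2\otimes 2^{n-2}$ mixed state one has the standard inequality $\mathcal{C}_{A|BC}^2\geq\mathcal{C}_{AB}^2+\mathcal{C}_{AC}^2$. Raising both sides to the power $\beta/2$, which is $\geq 2$ since $\beta\geq 4$, gives
$$\mathcal{C}_{A|BC}^\beta\geq(\mathcal{C}_{AB}^2+\mathcal{C}_{AC}^2)^{\beta/2}.$$

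Next I would factor $\mathcal{C}_{AB}^\beta$ out and introduce the auxiliary variable $t=\mathcal{C}_{AC}^2/\mathcal{C}_{AB}^2$. The hypothesis $\mathcal{C}_{AC}^2\leq k\mathcal{C}_{AB}^2$ translates into $0\leq t\leq k$, which is precisely the range required by Lemma~\ref{con1}. Applying that lemma with $x=\beta/2\geq 2$ yields
$$(1+t)^{\beta/2}\geq 1+\Big[\tfrac{(1+k)^{\beta/2}-1}{k^{\beta/2}}+k^{\beta/2}-t^{\beta/2}\Big]t^{\beta/2}.$$

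Multiplying through by $\mathcal{C}_{AB}^\beta$ and using the identities $t^{\beta/2}=(\mathcal{C}_{AC}/\mathcal{C}_{AB})^\beta$ and $\mathcal{C}_{AB}^\beta\cdot t^{\beta/2}=\mathcal{C}_{AC}^\beta$ converts the right-hand side directly into the claimed bound \eqref{Con9}.

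The argument is essentially a one-line substitution, so there is no real obstacle; the only minor point to mention is that the case $\mathcal{C}_{AB}=0$ must be handled separately (in that case $\mathcal{C}_{AC}=0$ as well by the hypothesis, and the inequality is trivial), in order to legitimise dividing by $\mathcal{C}_{AB}^2$ when defining~$t$. Everything else reduces to the already-established CKW inequality and Lemma~\ref{con1}.
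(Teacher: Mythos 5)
Your proposal is correct and follows essentially the same route as the paper: invoke the monogamy inequality $\mathcal{C}_{A|BC}^2\geq\mathcal{C}_{AB}^2+\mathcal{C}_{AC}^2$ for $2\otimes2\otimes2^{n-2}$ states, raise it to the power $\beta/2$, factor out $\mathcal{C}_{AB}^\beta$, and apply Lemma~\ref{con1} with $t=\mathcal{C}_{AC}^2/\mathcal{C}_{AB}^2\leq k$ and $x=\beta/2\geq2$. Your remark on treating $\mathcal{C}_{AB}=0$ separately matches the paper's closing observation, so nothing is missing.
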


\begin{proof}
Since $\mathcal{C}_{AC}^2\leq k\mathcal{C}_{AB}^2$ and $\mathcal{C}_{AB}>0$,
we obtain
\begin{equation}
\begin{array}{rl}
&\mathcal{C}_{A|BC}^\beta\geq(\mathcal{C}_{AB}^2+\mathcal{C}_{AC}^2)^{\frac{\beta}{2}}\\[1.5mm]
&\ \ \ \ \ \ \ \ \ =\mathcal{C}_{AB}^\beta \Big(1+\frac{\mathcal{C}_{AC}^2}{\mathcal{C}_{AB}^2}\Big)^{\frac{\beta}{2}}\\[1.5mm]
&\ \ \ \ \ \ \ \ \ \geq\mathcal{C}_{AB}^\beta \Big\{1+\Big[\frac{(1+k)^{\frac{\beta}{2}}-1}{k^{\frac{\beta}{2}}}+k^{\frac{\beta}{2}} -
\Big(\frac{\mathcal{C}_{AC}^2}{\mathcal{C}_{AB}^2}\Big)^{\frac{\beta}{2}}\Big]  \Big(\frac{\mathcal{C}_{AC}^2}{\mathcal{C}_{AB}^2}\Big)^{\frac{\beta}{2}} \Big\}\\[3mm]
&\ \ \ \ \ \ \ \ \ =\mathcal{C}_{AB}^\beta+\Big[\frac{(1+k)^{\frac{\beta}{2}}-1}{k^{\frac{\beta}{2}}} + k^{\frac{\beta}{2}} -(\frac{\mathcal{C}_{AC}}{\mathcal{C}_{AB}})^\beta \Big]\mathcal{C}_{AC}^\beta,
\end{array}
\end{equation}
where the first inequality is due to  $\mathcal{C}_{A|BC}^2\geq\mathcal{C}_{AB}^2+\mathcal{C}_{AC}^2$ for
arbitrary $2\otimes2\otimes2^{n-2}$ tripartite state $\rho_{ABC}$ \cite{XJR}, the second inequality is due to Lemma \ref{con1}.
It is seen that if $\mathcal{C}_{AB}=0$,  $\mathcal{C}_{AC}=0$ and the lower bound becomes trivially zero.
\end{proof}

By using the above lemma, for multiqubit systems we have the following Theorems.

\begin{theorem}\label{concurrence1}
Let $k$ be a real number satisfying $0< k\leq1$. For an $N$-qubit mixed state $\rho_{AB_1\cdots B_{N-1}}$, if $k\mathcal{C}_{AB_i}^2\geq\mathcal{C}_{A|B_{i+1}\cdots B_{N-1}}^2$
for $i=1,2,\ldots,m$, and $\mathcal{C}_{AB_j}^2\leq k\mathcal{C}_{A|B_{j+1}\cdots B_{N-1}}^2$ for $j=m+1,\ldots,N-2$,
$\forall 1\leq m\leq N-3$, $N\geq 4$, then we have
\begin{equation}\label{Con12}
\begin{aligned}
\mathcal{C} ^{\beta}_{A|B_1\cdots B_{N-1}} & \geq \sum\limits_{i=1}^{m-1} \Big( \prod\limits_{j=1}^i \mathcal{M}_j \Big) \mathcal{C}^{\beta}_{AB_{i+1}}   \\
& +\Big( \prod\limits_{i=1}^{m} \mathcal{M}_i \Big) \Big( \prod\limits_{j=m+1}^{N-2} \mathcal{Q}_j \mathcal{C}^\beta_{AB_j} + \mathcal{C}^{\beta}_{AB_{N-1}} \Big)
\end{aligned}
\end{equation}
for all $\beta\geq 4$, where $M=\frac{(1+k)^{\frac{\beta}{2}}-1}{k^{\frac{\beta}{2}}} + k^{\frac{\beta}{2}}$,
$\mathcal{M}_i=M- \frac{\mathcal{C} ^\beta_{A|B_{i+1}\cdots B_{N-1}}}{\mathcal{C}^\beta_{AB_i}}$ for $i=1,2, \cdots, m$, and
$\mathcal{Q}_j=M-\frac{\mathcal{C}^\beta_{AB_j}}{\mathcal{C} ^\beta_{A|B_{j+1}\cdots B_{N-1}}} $ for $j=m+1, \cdots , N-2$.
\end{theorem}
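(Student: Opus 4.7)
The plan is to apply Lemma~\ref{con2} repeatedly along the bipartition, peeling off one qubit at a time and combining the resulting recursive inequalities in a telescoping manner.

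For each $i\in\{1,\ldots,m\}$ I would view the reduced state $\rho_{AB_iB_{i+1}\cdots B_{N-1}}$ as a $2\otimes 2\otimes 2^{N-1-i}$ tripartite state, with qubit $B_i$ playing the role of ``$B$'' and the block $B_{i+1}\cdots B_{N-1}$ playing the role of ``$C$''. The hypothesis $\mathcal{C}^2_{A|B_{i+1}\cdots B_{N-1}}\leq k\,\mathcal{C}^2_{AB_i}$ is then exactly what Lemma~\ref{con2} requires, and it yields
\begin{equation*}
\mathcal{C}^\beta_{A|B_i\cdots B_{N-1}}\geq \mathcal{C}^\beta_{AB_i}+\mathcal{M}_i\,\mathcal{C}^\beta_{A|B_{i+1}\cdots B_{N-1}}.
\end{equation*}
For $j\in\{m+1,\ldots,N-2\}$ the roles are swapped: the block $B_{j+1}\cdots B_{N-1}$ is treated as ``$B$'' and the qubit $B_j$ as ``$C$''. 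The remaining hypothesis $\mathcal{C}^2_{AB_j}\leq k\,\mathcal{C}^2_{A|B_{j+1}\cdots B_{N-1}}$ fits Lemma~\ref{con2} again, delivering
\begin{equation*}
\mathcal{C}^\beta_{A|B_j\cdots B_{N-1}}\geq \mathcal{C}^\beta_{A|B_{j+1}\cdots B_{N-1}}+\mathcal{Q}_j\,\mathcal{C}^\beta_{AB_j}.
\end{equation*}

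With these two one-step bounds in hand, the remainder of the argument is purely telescopic. Iterating the ``$\mathcal{Q}$-recurrence'' from $j=N-2$ down to $j=m+1$, and using the base case $\mathcal{C}^\beta_{A|B_{N-1}}=\mathcal{C}^\beta_{AB_{N-1}}$, collapses the tail into $\mathcal{C}^\beta_{AB_{N-1}}+\sum_{j=m+1}^{N-2}\mathcal{Q}_j\mathcal{C}^\beta_{AB_j}$. Iterating the ``$\mathcal{M}$-recurrence'' from $i=1$ up to $i=m$ then generates the cumulative products $\prod_{j=1}^{i}\mathcal{M}_j$ in front of each $\mathcal{C}^\beta_{AB_{i+1}}$ and leaves the factor $\prod_{i=1}^{m}\mathcal{M}_i$ multiplying the tail produced in the previous step. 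Substituting the first expression into the surviving entanglement $\mathcal{C}^\beta_{A|B_{m+1}\cdots B_{N-1}}$ yields~(\ref{Con12}).

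I anticipate no analytic obstacle, since all of the nontrivial work has already been absorbed into Lemma~\ref{con2} (and ultimately into the elementary inequality of Lemma~\ref{con1}). The main challenge is notational: ensuring that the indexing of the successively shorter reduced states is consistent with the indexing of the $\mathcal{M}_i$ and $\mathcal{Q}_j$ coefficients, and verifying that the nested products align exactly with the form asserted in the theorem.
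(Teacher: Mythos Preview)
Your proposal is correct and follows essentially the same route as the paper's own proof: apply Lemma~\ref{con2} at each step---with $B_i$ playing the role of $B$ for $i\leq m$ and with the block $B_{j+1}\cdots B_{N-1}$ playing the role of $B$ for $j\geq m+1$---and then telescope the resulting chain of inequalities to obtain~\eqref{Con12}. The paper carries this out exactly as you describe, writing the two recursive chains explicitly before combining them.
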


\begin{proof}
From the inequality \eqref{Con9}, we have
\begin{equation}\label{Con10}
\begin{array}{rl}
&\mathcal{C}_{A|B_1B_2\cdots B_{N-1}}^\beta\\[2.0mm]
&\ \ \ \geq \mathcal{C}^\beta_{AB_1}+ \Big[ M- \Big(\frac{\mathcal{C}_{A|B_2\cdots B_{N-1}}}{\mathcal{C}_{AB_1}} \Big)^\beta \Big]\mathcal{C}^\beta_{A|B_2\cdots B_{N-1}}\\[2.0mm]
&\ \ \geq\mathcal{C}_{AB_1}^\beta+\mathcal{M}_1 \mathcal{C}_{AB_2}^\beta + \mathcal{M}_1 \mathcal{M}_2\mathcal{C}_{A|B_3\cdots B_{N-1}}^\beta\\[2.0mm]
&\ \ \geq\cdots\\[2.0mm]
&\ \ \geq\mathcal{C}_{AB_1}^\beta+\mathcal{M}_1 \mathcal{C}_{AB_2}^\beta + \cdots +\mathcal{M}_1 \mathcal{M}_2 \cdots \mathcal{M}_{m-1}\mathcal{C}_{AB_m}^\beta  \\[2.0mm]
&\ \ \ \ +\mathcal{M}_1 \mathcal{M}_2 \cdots \mathcal{M}_{m} \mathcal{C}_{A|B_{m+1} \cdots B_{N-1}}^\beta.
\end{array}
\end{equation}
Similarly, as $\mathcal{C}_{AB_j}^2\leq k\mathcal{C}_{A|B_{j+1}\cdots B_{N-1}}^2$, for $j=m+1,\ldots,N-2$ we get
\begin{equation}\label{Con11}
\begin{array}{rl}
&\mathcal{C}_{A|B_{m+1}\cdots B_{N-1}}^\beta\\[2.0mm]
&\ \ \geq \mathcal{Q}_{m+1} \mathcal{C}_{AB_{m+1}}^\beta + \mathcal{C}_{A|B_{m+2\cdots B_{N-1}}}^\beta \\[2.0mm]
&\ \ \geq\cdots\\[2.0mm]
&\ \ \geq \mathcal{Q}_{m+1} \mathcal{C}_{AB_{m+1}}^\beta + \cdots +\mathcal{Q}_{N-2} \mathcal{C}_{AB_{N-2}}^\beta +\mathcal{C}_{AB_{N-1}}^\beta.
\end{array}
\end{equation}
Combining \eqref{Con10} and \eqref{Con11} we get the inequality \eqref{Con12}.
\end{proof}

In particular, if the conditions $k\mathcal{C}_{AB_i}\geq\mathcal{C}_{A|B_{i+1}\cdots B_{N-1}}$ for $i=1,2,\ldots,m$,
and $\mathcal{C}_{AB_j}^2\leq k\mathcal{C}_{A|B_{j+1}\cdots B_{N-1}}^2$ for $j=m+1,\ldots,N-2$, $\forall 1\leq m\leq N-3$, $N\geq 4$,
in Theorem \ref{concurrence1} reduce simply to $k\mathcal{C}_{AB_i}^2\geq \mathcal{C}_{A|B_{i+1}\cdots B_{N-1}}^2$ for $i=1,2,\ldots,N-2$,
then we have the following theorem.

\begin{theorem}\label{concurrence2}
For real number $0<k\leq 1$ and $N$-qubit mixed state $\rho_{AB_1\cdots B_{N-1}}$,
if $k\mathcal{C}_{AB_i}^2\geq\mathcal{C}_{A|B_{i+1}\cdots B_{N-1}}^2$ for all $i=1,2,\ldots,N-2$, then
\begin{equation}\label{Con13}
\mathcal{C}^\beta_{A|B_1\cdots B_{N-1}} \geq \sum\limits_{i=1}^{N-2} \Big( \prod\limits_{j=1}^i \mathcal{M}_j \Big) \mathcal{C}^{\beta}_{AB_{i+1}}
\end{equation}
for $\beta\geq 4$.
\end{theorem}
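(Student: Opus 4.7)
The plan is to recognize Theorem \ref{concurrence2} as the degenerate case of Theorem \ref{concurrence1} in which every index $i\in\{1,\ldots,N-2\}$ falls in the ``first'' region (effectively $m=N-2$, with the $\mathcal{Q}_j$ block empty). Equivalently, I would give a direct proof by iterating Lemma \ref{con2} alone, without ever invoking the companion inequality \eqref{Con11}.

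First I would apply Lemma \ref{con2} to the split $A\,|\,B_1(B_2\cdots B_{N-1})$: since by hypothesis $k\mathcal{C}_{AB_1}^2\geq\mathcal{C}^2_{A|B_2\cdots B_{N-1}}$, the identifications $\mathcal{C}_{AB}\leftarrow\mathcal{C}_{AB_1}$ and $\mathcal{C}_{AC}\leftarrow\mathcal{C}_{A|B_2\cdots B_{N-1}}$ in Lemma \ref{con2} give
\[
\mathcal{C}^\beta_{A|B_1\cdots B_{N-1}}\ \geq\ \mathcal{C}^\beta_{AB_1}+\mathcal{M}_1\,\mathcal{C}^\beta_{A|B_2\cdots B_{N-1}}.
\]
The same hypothesis at $i=2$ yields $\mathcal{C}^\beta_{A|B_2\cdots B_{N-1}}\geq\mathcal{C}^\beta_{AB_2}+\mathcal{M}_2\,\mathcal{C}^\beta_{A|B_3\cdots B_{N-1}}$, and substituting back produces one more two-qubit term. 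Iterating this telescoping $N-2$ times, peeling off a single qubit at each step, eliminates all intermediate multipartite concurrences and leaves
\[
\mathcal{C}^\beta_{A|B_1\cdots B_{N-1}}\ \geq\ \mathcal{C}^\beta_{AB_1}+\sum_{i=1}^{N-2}\Big(\prod_{j=1}^{i}\mathcal{M}_j\Big)\mathcal{C}^\beta_{AB_{i+1}},
\]
which matches \eqref{Con13} once the leading $\mathcal{C}^\beta_{AB_1}$ is absorbed into the sum via the usual empty-product convention (and, read literally, furnishes a slightly stronger bound than \eqref{Con13}).

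There is essentially no hard step; the only point to verify is that at every stage of the recursion the tripartite hypothesis $k\mathcal{C}_{AB_i}^2\geq\mathcal{C}^2_{A|B_{i+1}\cdots B_{N-1}}$ required by Lemma \ref{con2} is in force, which is precisely the all-$i$ assumption of the theorem. In this sense Theorem \ref{concurrence2} is an immediate corollary of the method of Theorem \ref{concurrence1}, specialized to the case where the index split collapses and the $\mathcal{Q}_j$ factors drop out.
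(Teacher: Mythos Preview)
Your proposal is correct and matches the paper's own treatment: the paper presents Theorem~\ref{concurrence2} as the special case of Theorem~\ref{concurrence1} obtained when every index satisfies the first hypothesis, and your direct iteration of Lemma~\ref{con2} is exactly the chain \eqref{Con10} carried through $i=N-2$ with no need for \eqref{Con11}. Your remark that the leading $\mathcal{C}^\beta_{AB_1}$ term is missing from \eqref{Con13} as written (and likewise from \eqref{Con12}) is also apt; read literally the stated bound is slightly weaker than what the iteration actually yields.
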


It can be seen that the inequalities \eqref{Con12} and \eqref{Con13} are tighter than the ones given in Refs. \cite{Yanglongmei} and \cite{Taoyuanhong},
since
$$
\frac{(1+k)^x-1}{k^x} + k^x-t^x \geq 2^x-t^x
$$
for $x\geq2$ and $0<k\leq1$. The equality holds when $k=1$. Namely, the result (\ref{Con4}) given in \cite{Taoyuanhong} is  a special case of our result for $k=1$.

\smallskip
\noindent{\bf Example 1} \, \ Consider the three-qubit state $|\psi\rangle_{ABC}$ in generalized Schmidt decomposition form \cite{Schmidt,SM.Fei5},
\begin{equation}\label{Con6}
\begin{array}{rcl}
|\psi\rangle_{ABC}&=&\lambda_0|000\rangle+\lambda_1e^{i\varphi}|100\rangle
+\lambda_2|101\rangle\\[1mm]
&&+\lambda_3|110\rangle+\lambda_4|111\rangle,
\end{array}
\end{equation}
where $\lambda_i\geq0$, $i=,1,2...,4$, and $\sum\limits_{i=0}^4\lambda_i^2=1$.
We have $\mathcal{C}_{AB}=2\lambda_0\lambda_2$, $\mathcal{C}_{AC}=2\lambda_0\lambda_3$ and $\mathcal{C}_{A|BC}=2\lambda_0\sqrt{\lambda_2^2+\lambda_3^2+\lambda_4^2}$
.
Set $\lambda_0=\lambda_3=\frac{1}{2},\ \lambda_2=\frac{\sqrt{2}}{2}$ and $\lambda_1=\lambda_4=0$.
We have $\mathcal{C}_{AB}=\frac{1}{2}$ , $\mathcal{C}_{AC}=\frac{1}{2\sqrt{2}}$ and $\mathcal{C}_{A|BC}=\frac{\sqrt{2}}{2}$ .
Hence we have the lower bounds of the inequalities \eqref{ylm}, \eqref{Con3}, \eqref{Con4} and \eqref{Con9}, respectively,
$$
\mathcal{C}_{AB}^\beta+\frac{(1+k)^\frac{\beta}{2}-1}{k^{\frac{\beta}{2}}}\mathcal{C}_{AC}^\beta=\big(\frac{1}{2}\big)^\beta+ \frac{(1+k)^\frac{\beta}{2}-1}{k^{\frac{\beta}{2}}}\big(\frac{1}{2\sqrt{2}}\big)^\beta,
$$
$$
\mathcal{C}_{AB}^\beta+\big(2^{\frac{\beta}{2}}-1\big)\mathcal{C}_{AC}^\beta=\big(\frac{1}{2}\big)^\beta+\big(2^{\frac{\beta}{2}}-1\big)\big(\frac{1}{2\sqrt{2}}\big)^\beta,
$$
$$\mathcal{C}_{AB}^\beta+\big[2^{\frac{\beta}{2}}-\big(\frac{\mathcal{C}_{AC}}{\mathcal{C}_{AB}}\big)^\beta
\big]\mathcal{C}_{AC}^\beta=\big(\frac{1}{2}\big)^\beta+ \big(2^{\frac{\beta}{2}}-\big(\frac{1}{\sqrt{2}}\big)^\beta\big)\big(\frac{1}{2\sqrt{2}}\big)^\beta
$$
and
$$
\mathcal{C}_{AB}^\beta+\big[\frac{(1+k)^\frac{\beta}{2}-1}{k^{\frac{\beta}{2}}} + k^{\frac{\beta}{2}}-\big(\frac{\mathcal{C}_{AC}}{\mathcal{C}_{AB}}\big)^\beta \big]\mathcal{C}_{AC}^\beta
$$
$$
=\big(\frac{1}{2}\big)^\beta+ \big[\frac{(1+k)^\frac{\beta}{2}-1}{k^{\frac{\beta}{2}}}+k^{\frac{\beta}{2}}-\big(\frac{1}{\sqrt{2}}\big)^\beta \big]\big(\frac{1}{2\sqrt{2}}\big)^\beta. $$
Set $k=0.8$. Fig. 1 shows that our result is better than the result in \cite{SM.Fei3,Taoyuanhong,Yanglongmei}. That is, the lower bound on $\mathcal{C}_{A|BC}^\beta$ that we derive from (\ref{Con9}) is larger than that derived from \eqref{Con4}-\eqref{Con4}.
\begin{figure}
\centering  
\includegraphics[width=8cm]{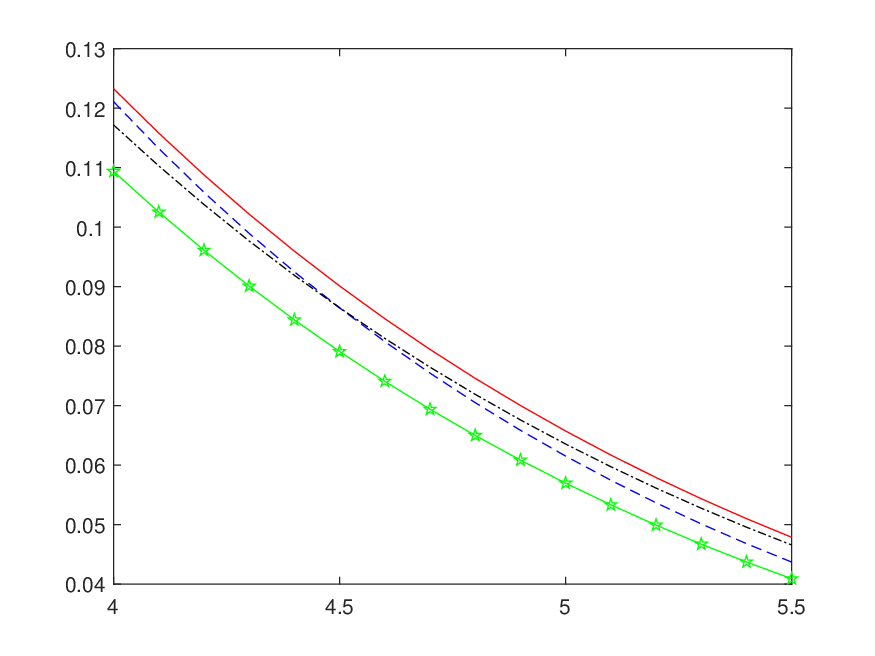}
\caption{The vertical axis is the lower bound of the concurrence of  $|\psi\rangle_{ABC}$. The horizontal axis is $\beta$.
The solid-red line represents the lower bound from our result. The dashed-blue line represents the lower bound of (\ref{Con4}) from \cite{Taoyuanhong}. The star-solid-green line represents the lower bound of (\ref{Con3}) from \cite{SM.Fei3}. The dot-dashed-black line represents the lower bound of  (\ref{ylm}) from \cite{Yanglongmei}}.
\end{figure}

\section{Tighter monogamy inequalities related to negativity}
 The negativity is another well-known quantifier of bipartite entanglement. The negativity of a state $\rho_{AB}$ is defined as
$\mathcal{N}(\rho_{AB})=\big(\|\rho_{AB}^{T_A}\|-1\big)/2$ \cite{Vidal}, where $\rho_{AB}^{T_A}$ is the partial transposed matrix of $\rho_{AB}$ with respect to the subsystem $A$,
and $\|X\|$ is the trace norm of $X$, i.e., $\|x\|={\rm tr}\sqrt{xx^{\dag}}$.
For simplicity, we use the definition of negativity as $\|\rho_{AB}^{T_A}\|-1$.
Particularly, for any bipartite pure state $|\psi\rangle_{AB}$,
$\mathcal{N}(|\psi\rangle_{AB})=2\sum\limits_{i<j}\sqrt{\lambda_i\lambda_j}=({\rm tr}\sqrt{\rho_A})^2-1$,
where $\lambda_i s$ are the eigenvalues of the reduced density matrix $\rho_A={\rm tr}_B|\psi\rangle_{AB}\langle\psi|$ \cite{HTAG}.
The convex-roof extended negativity (CREN) of a mixed state $\rho_{AB}$ is defined by
\begin{equation}
\mathcal{N}_c(\rho_{AB})=\min\limits_{\{p_i,|\psi_i\rangle\}}\sum_{i}p_i\mathcal{N}(|\psi_i\rangle),
\end{equation}
where the minimum is taken over all possible pure state decompositions of $\rho_{AB}$. For any two-qubit mixed state $\rho_{AB}$ one has
$\mathcal{N}_c(\rho_{AB})=\mathcal{C}(\rho_{AB})$ \cite{JSK}.
Similar to the concurrence, we have the following Theorems.

\begin{theorem}
For any real number $0<k\leq 1$ and $N$-qubit mixed state $\rho_{AB_1\cdots B_{N-1}}$,
if $k\mathcal{N}_{cAB_i}^2\geq\mathcal{N}_{cA|B_{i+1}\cdots B_{N-1}}^2$
for $i=1,2,\ldots,m$, and $\mathcal{N}_{cAB_j}^2\leq k\mathcal{N}_{cA|B_{j+1}\cdots B_{N-1}}^2$ for $j=m+1,\ldots,N-2$,
$\forall 1\leq m\leq N-3$, $N\geq 4$,
\begin{equation}\label{negativity1}
\begin{aligned}
\mathcal{N} ^{\beta}_{cA|B_1\cdots B_{N-1}} & \geq \sum\limits_{i=1}^{m-1} \Big( \prod\limits_{j=1}^i \mathcal{P}_j \Big) \mathcal{N}^{\beta}_{cAB_{i+1}}   \\
& +\Big( \prod\limits_{i=1}^{m} \mathcal{P}_i \Big) \Big( \prod\limits_{j=m+1}^{N-2} \mathcal{R}_j \mathcal{N}^\beta_{cAB_j} + \mathcal{N}^{\beta}_{cAB_{N-1}} \Big)
\end{aligned}
\end{equation}
for all $\beta\geq 4$, where $M=\frac{(1+k)^{\frac{\beta}{2}}-1}{k^{\frac{\beta}{2}}} + k^{\frac{\beta}{2}}$,
$\mathcal{P}_i=M- \frac{\mathcal{N} ^\beta_{cA|B_{i+1}\cdots B_{N-1}}}{\mathcal{N}^\beta_{cAB_i}}$ for $i=1,2, \cdots, m$, and
$\mathcal{R}_j=M-\frac{\mathcal{N}^\beta_{cAB_j}}{\mathcal{N} ^\beta_{cA|B_{j+1}\cdots B_{N-1}}} $ for $j=m+1, \cdots , N-2$.
\end{theorem}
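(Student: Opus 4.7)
The plan is to mirror the proof of Theorem \ref{concurrence1}, replacing concurrence $\mathcal{C}$ with CREN $\mathcal{N}_c$ throughout. The key structural ingredient is a CREN analog of Lemma \ref{con2}, which in turn relies on the squared monogamy inequality $\mathcal{N}_{cA|BC}^2 \geq \mathcal{N}_{cAB}^2 + \mathcal{N}_{cAC}^2$ for any $2\otimes 2\otimes 2^{n-2}$ tripartite state. This squared CREN monogamy is well established in the literature (recall that $\mathcal{N}_c=\mathcal{C}$ on two-qubit states, and CREN inherits the same multiqubit squared monogamy as concurrence), and will serve as my starting inequality.

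First, I would combine this squared CREN monogamy with Lemma \ref{con1} to obtain a CREN version of \eqref{Con9}: for any $2\otimes 2\otimes 2^{n-2}$ mixed state with $\mathcal{N}_{cAC}^2\leq k\mathcal{N}_{cAB}^2$, the inequality $\mathcal{N}_{cA|BC}^\beta\geq\mathcal{N}_{cAB}^\beta+[M-(\mathcal{N}_{cAC}/\mathcal{N}_{cAB})^\beta]\,\mathcal{N}_{cAC}^\beta$ holds for $\beta\geq 4$, with $M=\frac{(1+k)^{\beta/2}-1}{k^{\beta/2}}+k^{\beta/2}$. The derivation is a line-by-line copy of the proof of Lemma \ref{con2}: raise both sides of the squared monogamy to the $\beta/2$-th power, factor out $\mathcal{N}_{cAB}^\beta$, and apply Lemma \ref{con1} with $t=\mathcal{N}_{cAC}^2/\mathcal{N}_{cAB}^2$.

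Next, I would iterate this tripartite bound exactly as in \eqref{Con10} and \eqref{Con11}. Starting from the full bipartition $A|B_1\cdots B_{N-1}$ and viewing $B_1$ against $B_2\cdots B_{N-1}$, the hypothesis $k\mathcal{N}_{cAB_1}^2\geq\mathcal{N}_{cA|B_2\cdots B_{N-1}}^2$ peels off the first qubit with coefficient $\mathcal{P}_1$ multiplying $\mathcal{N}_{cA|B_2\cdots B_{N-1}}^\beta$. Iterating for $i=1,\dots,m$ produces the first sum in \eqref{negativity1} together with the prefactor $\prod_{i=1}^{m}\mathcal{P}_i$ on the surviving $\mathcal{N}_{cA|B_{m+1}\cdots B_{N-1}}^\beta$. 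For $j=m+1,\dots,N-2$, the reverse-direction hypothesis $\mathcal{N}_{cAB_j}^2\leq k\mathcal{N}_{cA|B_{j+1}\cdots B_{N-1}}^2$ lets me apply the tripartite bound with the roles of the two subsystems interchanged, so that the coefficient $\mathcal{R}_j=M-\mathcal{N}_{cAB_j}^\beta/\mathcal{N}_{cA|B_{j+1}\cdots B_{N-1}}^\beta$ multiplies $\mathcal{N}_{cAB_j}^\beta$ at each stage. Combining these two chains yields \eqref{negativity1}.

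The main obstacle is largely notational rather than conceptual: one must carefully track the two distinct regimes (the $\mathcal{P}_i$ chain versus the $\mathcal{R}_j$ chain), assign the correct quantity as the ratio $t$ in Lemma \ref{con1} in each regime, and check that the indexing shift between the inner sum $\sum_{i=1}^{m-1}$ and the product $\prod_{i=1}^{m}$ matches the concurrence case. The only genuinely nontrivial external input beyond the concurrence proof is the squared CREN monogamy for $2\otimes 2\otimes 2^{n-2}$ states; once that is invoked, the rest is a direct transcription of the proof of Theorem \ref{concurrence1}.
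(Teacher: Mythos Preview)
Your proposal is correct and matches the paper's own approach: the paper does not give an explicit proof of this theorem but simply notes that $\mathcal{N}_c(\rho_{AB})=\mathcal{C}(\rho_{AB})$ for two-qubit states and then states that the CREN result follows ``similar to the concurrence,'' which is precisely the transcription of the proof of Theorem~\ref{concurrence1} that you outline. Your plan is, if anything, slightly more careful in explicitly flagging the squared CREN monogamy $\mathcal{N}_{cA|BC}^2\geq \mathcal{N}_{cAB}^2+\mathcal{N}_{cAC}^2$ as the needed external input before iterating Lemmas~\ref{con1} and~\ref{con2}.
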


\begin{theorem}
For any real number $0<k\leq 1$ and $N$-qubit mixed state $\rho_{AB_1\cdots B_{N-1}}$,
if $k\mathcal{N}_{cAB_i}^2\geq\mathcal{N}_{cA|B_{i+1}\cdots B_{N-1}}^2$ for all $i=1,2,\ldots,N-2$, then
\begin{equation}\label{negativity2}
\mathcal{N}^\beta_{cA|B_1\cdots B_{N-1}} \geq \sum\limits_{i=1}^{N-2} \Big( \prod\limits_{j=1}^i \mathcal{P}_j \Big) \mathcal{N}^{\beta}_{cAB_{i+1}}
\end{equation}
for $\beta\geq 4$, where $M$ and $\mathcal{P}_j$ are defined in Theorem 3.
\end{theorem}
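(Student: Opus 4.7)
The strategy is to run the concurrence argument of Theorems~\ref{concurrence1}--\ref{concurrence2} for CREN almost verbatim, and then deduce Theorem~4 as the specialization $m=N-2$ of the preceding Theorem~3. Two ingredients drive the proof: Lemma~\ref{con1}, which is independent of the entanglement measure, and the tripartite CREN monogamy $\mathcal{N}_c^2(\rho_{A|BC})\geq\mathcal{N}_c^2(\rho_{AB})+\mathcal{N}_c^2(\rho_{AC})$ valid for any $2\otimes 2\otimes 2^{n-2}$ state; the latter plays the role of Ren's concurrence bound that was used in Lemma~\ref{con2}.

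\textbf{Step 1: CREN analogue of Lemma~\ref{con2}.} Starting from the tripartite CREN monogamy and using the hypothesis $\mathcal{N}_{cAC}^2\leq k\mathcal{N}_{cAB}^2$, I write $\mathcal{N}_{cA|BC}^\beta\geq\mathcal{N}_{cAB}^\beta\bigl(1+\mathcal{N}_{cAC}^2/\mathcal{N}_{cAB}^2\bigr)^{\beta/2}$ and apply Lemma~\ref{con1} with $t=\mathcal{N}_{cAC}^2/\mathcal{N}_{cAB}^2$ and $x=\beta/2\geq 2$. This yields
$$
\mathcal{N}_{cA|BC}^\beta\geq\mathcal{N}_{cAB}^\beta+\Big[\tfrac{(1+k)^{\beta/2}-1}{k^{\beta/2}}+k^{\beta/2}-\bigl(\tfrac{\mathcal{N}_{cAC}}{\mathcal{N}_{cAB}}\bigr)^{\!\beta}\Big]\mathcal{N}_{cAC}^\beta,
$$
which is the CREN version of \eqref{Con9}. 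The degenerate case $\mathcal{N}_{cAB}=0$ forces $\mathcal{N}_{cAC}=0$ and the bound is trivial, exactly as in Lemma~\ref{con2}.

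\textbf{Step 2: Iteration to Theorem~3 and specialization to Theorem~4.} I then iterate the tripartite inequality in the same two-phase pattern as in the derivation of \eqref{Con10}--\eqref{Con11}: first on the upper block where $k\mathcal{N}_{cAB_i}^2\geq\mathcal{N}_{cA|B_{i+1}\cdots B_{N-1}}^2$ for $i=1,\ldots,m$ (which generates the factors $\mathcal{P}_i$), and then on the lower block where the reverse inequality holds (which generates $\mathcal{R}_j$). The formal substitution is $\mathcal{C}\mapsto\mathcal{N}_c$, $\mathcal{M}_i\mapsto\mathcal{P}_i$, $\mathcal{Q}_j\mapsto\mathcal{R}_j$, and this establishes Theorem~3. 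Under the hypothesis of Theorem~4 the second condition is vacuous, so I may take $m=N-2$ in Theorem~3; the product $\prod_{j=m+1}^{N-2}$ then becomes empty and the two pieces of the right-hand side collapse into $\sum_{i=1}^{N-2}\bigl(\prod_{j=1}^i\mathcal{P}_j\bigr)\mathcal{N}^\beta_{cAB_{i+1}}$, which is precisely \eqref{negativity2}.

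\textbf{Main obstacle.} The only non-mechanical point is justifying the tripartite CREN bound at every step of the iteration. As we peel off one qubit $B_i$ at a time, the remaining state $\rho_{A|B_iB_{i+1}\cdots B_{N-1}}$ must be treated as a $2\otimes 2\otimes 2^{n-2}$ system with $B_{i+1}\cdots B_{N-1}$ regarded as a single higher-dimensional subsystem; one must check that the convex-roof quantity $\mathcal{N}_c$ (not the bare negativity $\mathcal{N}$) is the correct object throughout and that the tripartite monogamy survives the mixing and partial tracing performed at each iteration. Once this bookkeeping is in place, the remainder of the proof is the same algebraic cascade as in Theorems~\ref{concurrence1}--\ref{concurrence2}.
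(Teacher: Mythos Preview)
Your proposal is correct and follows essentially the same approach the paper intends: the paper states Theorems~3 and~4 immediately after noting that $\mathcal{N}_c=\mathcal{C}$ for two-qubit states and that the arguments are ``similar to the concurrence,'' so the implicit proof is precisely the $\mathcal{C}\mapsto\mathcal{N}_c$ transcription of Lemma~\ref{con2} and the cascade \eqref{Con10}--\eqref{Con11} that you describe. Your reading of Theorem~4 as the case $m=N-2$ of Theorem~3 (with the $\mathcal{R}_j$ block empty) also matches how the paper derives Theorem~\ref{concurrence2} from Theorem~\ref{concurrence1}.
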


\smallskip
\noindent{\bf Example 2} \, \
Consider the three-qubit state (\ref{Con6}) in Example 1 again. Set $\lambda_0=\lambda_1=\lambda_2=\frac{\sqrt{2}}{3}$ and $\lambda_3=\lambda_4=\frac{1}{\sqrt{6}}$.
We have $\mathcal{N}_{cAB}=\frac{4}{9}$ , $\mathcal{N}_{cAC}=\frac{2\sqrt{3}}{9}$ and $\mathcal{N}_{cA|BC}=\frac{2\sqrt{10}}{9}$ . Hence, we have the lower bounds of the inequalities \eqref{ylm}, \eqref{Con3}, \eqref{Con4} and \eqref{Con9}, respectively,
$$\mathcal{N}_{cAB}^\beta+\frac{(1+k)^\frac{\beta}{2}-1}
{k^{\frac{\beta}{2}}}\mathcal{N}_{cAC}^\beta=\big(\frac{4}{9}\big)^\beta+ \frac{(1+k)^\frac{\beta}{2}-1}{k^{\frac{\beta}{2}}}\big(\frac{2\sqrt{3}}{9}\big)^\beta,
$$
$$
\mathcal{N}_{cAB}^\beta+\big(2^{\frac{\beta}{2}}-1\big)
\mathcal{N}_{cAC}^\beta=\big(\frac{4}{9}\big)^\beta
+\big(2^{\frac{\beta}{2}}-1\big)\big(\frac{2\sqrt{3}}{9}\big)^\beta,
$$
$$
\mathcal{N}_{cAB}^\beta+\big[2^{\frac{\beta}{2}}
-\big(\frac{\mathcal{N}_{cAC}}{\mathcal{N}_{cAB}}\big)^\beta \big]\mathcal{N}_{cAC}^\beta=\big(\frac{4}{9}\big)^\beta+ \big(2^{\frac{\beta}{2}}-\big(\frac{\sqrt{3}}{2}\big)^\beta\big)
\big(\frac{2\sqrt{3}}{9}\big)^\beta
$$
and
$$
\mathcal{N}_{cAB}^\beta+\big[\frac{(1+k)^\frac{\beta}{2}-1}{k^{\frac{\beta}{2}}} + k^{\frac{\beta}{2}}-\big(\frac{\mathcal{N}_{cAC}}{\mathcal{N}_{cAB}}\big)^\beta \big]\mathcal{N}_{cAC}^\beta
$$
$$
=\big(\frac{4}{9}\big)^\beta+ \big[\frac{(1+k)^\frac{\beta}{2}-1}{k^{\frac{\beta}{2}}}+k^{\frac{\beta}{2}}-\big(\frac{\sqrt{3}}{2}\big)^\beta \big]\big(\frac{2\sqrt{3}}{9}\big)^\beta.
$$
Our result is better than the results in \cite{SM.Fei3, Yanglongmei, Taoyuanhong} for $\beta\geq4$, see Fig. 2 for $k=0.8$.
That is, the lower bound on $\mathcal{N}_{cA|BC}^\beta$ ($\mathcal{C}_{A|BC}^\beta$)that we derive from (\ref{negativity1}) is larger than that derived from \cite{SM.Fei3, Yanglongmei, Taoyuanhong}.
\begin{figure}
\centering
\includegraphics[width=8cm]{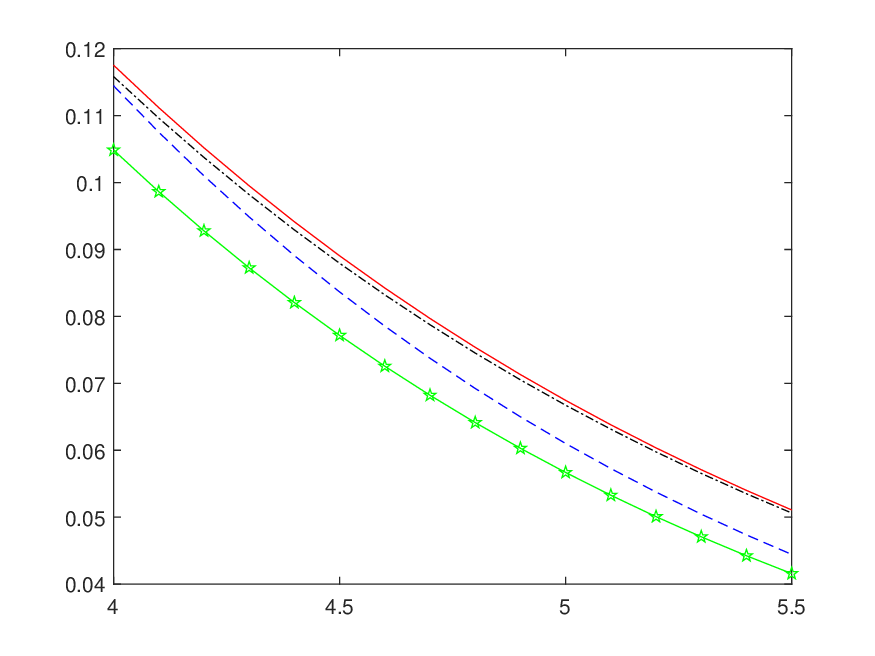}
\caption{The vertical axis is the lower bound of the concurrence of $|\psi\rangle_{ABC}$. The horizontal axis is $\beta$.
The solid-red line represents the lower bound from our result. The dashed-blue line represents the lower bound of (\ref{Con4}) from \cite{Taoyuanhong}. The star-solid-green line represents the lower bound of (\ref{Con3}) from \cite{SM.Fei3}. The dot-dashed-black line represents the lower bound of  (\ref{ylm}) from \cite{Yanglongmei}}.
\end{figure}

\section{Conclusion}
 Entanglement monogamy is a crucial aspect of multipartite entanglement. We have presented monogamy relations based on the $\beta$th power of concurrence and negativity. It should be noted that all of the monogamy inequalities proposed in this paper are tighter than the previous related ones. These improved monogamy inequalities offer more precise descriptions of how entanglement is distributed within multiqubit systems. Our approaches can be applied to the study on the monogamy properties based on other quantum correlation measures as well as quantum coherence \cite{framework,coherence}.

\section{Acknowledgements*} This work is supported by NSFC (Grant Nos. 12075159, 12171044), Beijing Natural Science Foundation (Z190005), and the Academician Innovation Platform of Hainan Province.

\bigskip
\noindent{\bf Data availability statement}
All data generated or analysed during this study are included in this published article.

\end{document}